\documentclass[pdflatex,sn-mathphys-num]{sn-jnl}

\usepackage{amsmath,amssymb,amsthm,mathtools}
\usepackage{xurl}
\usepackage{microtype}

\theoremstyle{plain}
\newtheorem{theorem}{Theorem}[section]
\newtheorem{lemma}[theorem]{Lemma}
\newtheorem{proposition}[theorem]{Proposition}
\newtheorem{corollary}[theorem]{Corollary}
\theoremstyle{definition}

\newtheorem{assumption}[theorem]{Assumption}
\newtheorem{example}[theorem]{Example}
\theoremstyle{remark}
\newtheorem{remark}[theorem]{Remark}

\newcommand{\TV}{\operatorname{TV}}
\newcommand{\Prob}{\mathbb{P}}
\newcommand{\Expect}{\mathbb{E}}
\newcommand{\ind}{\mathbf{1}}
\newcommand{\Ddev}{\mathcal D_{\mathrm{dev}}}
\makeatletter
\g@addto@macro\ps@headings{\let\@oddhead\@empty\let\@evenhead\@empty}
\g@addto@macro\ps@titlepage{\let\@oddhead\@empty\let\@evenhead\@empty}
\makeatother

\begin{document}

\title[Conformal Within-Document Screening]{Two Conformal Constructions for Adaptive Within-Document AI-Text Screening}
\author*[1]{\fnm{Marco} \sur{Mandap}}
\email{marco.mandap@bulsu.edu.ph}
\author[1]{\fnm{Jerahmeel} \sur{Hipolito}}
\email{jerahmeel.hipolito@bulsu.edu.ph}
\author[1]{\fnm{Arcel} \sur{Galvez}}
\email{arcel.galvez@bulsu.edu.ph}
\author[1]{\fnm{Charlie Margaret} \sur{Balagtas}}
\email{2023102682@ms.bulsu.edu.ph}
\author[1]{\fnm{Michael Joshua} \sur{Buluran}}
\email{2023102675@ms.bulsu.edu.ph}
\author[1]{\fnm{Jeff Roel} \sur{Durmiendo}}
\email{2023102652@ms.bulsu.edu.ph}
\author[1]{\fnm{Rizzette E.} \sur{Lopez}}
\email{2022104397@ms.bulsu.edu.ph}
\affil[1]{\orgname{Bulacan State University}}

\abstract{We study false-alert control when screening for text generated by artificial intelligence (AI). The screening procedure selects document prefixes and detectors from observed evidence and may stop before exhausting its inspection budget. We give two finite-sample constructions under document-level exchangeability between human calibration documents and a new null document, with no restriction on dependence among tokens within a document. Construction A registers a finite family of prefix--detector scores and allocates a false-alert budget across their conformal ranks. A union bound protects any executed subset of that family. Construction B calibrates the complete-path maximum of a development-fixed adaptive policy. Each partial-path maximum is bounded by the complete maximum, so a terminal conformal rank protects early stopping without splitting the error budget. We prove marginal control of any false alert across the permitted inspection path and derive necessary calibration counts for rejection. We also state oracle testing, distribution-shift, and independent-audit bounds with their additional assumptions. Both constructions protect stopping within their specified scope; neither proof constructs an e-process or justifies multiplying conformal ranks. Detection power and computational savings remain questions for empirical evaluation.}

\keywords{AI-generated text detection, sequential testing, conformal calibration, exchangeability, multiple testing, false-alert control}
\maketitle

\section{Introduction}

Researchers screening for text generated by artificial intelligence (AI) must choose how much to inspect and which detector to run next. They can use probability-curvature scores from DetectGPT and Fast-DetectGPT or contrasting-model scores from Binoculars \citep{mitchell2023detectgpt,bao2024fastdetectgpt,hans2024binoculars}. To preserve a false-alert target while inspecting further prefixes or switching detectors, they need an error bound for the whole decision path. A guarantee for one fixed score does not justify selecting the most suspicious result from several inspections.

We treat a rejection as a screening alert against a specified human-reference distribution. Human authors can write text outside that population. An alert alone therefore cannot establish AI authorship or misconduct. We retain an insufficient-evidence status for a document that receives no alert.

\subsection{Positioning}

Chakraborty et al.\ analyze AI-text detectability under several settings, including an independent-replicate formulation \citep{chakraborty2024position}. Their sample counts do not specify a universal within-document token budget. Sadasivan et al.\ study distributional similarity and attacks that weaken detection \citep{sadasivan2025canai}. We distinguish these limits on power from the error-control question studied here.

Zhu et al.\ use length-aware multiscaled conformal prediction (MCP) to calibrate detection scores \citep{zhu2025reliably}. A threshold for one submitted text length does not, by itself, bound the probability of crossing any threshold across several inspected prefixes and detectors. Chen and Wang study sequential source detection from a stream of texts \citep{chenwang2025online}. Their stream-level setting differs from inspecting dependent segments of one document. We adopt the distinction between marginal and calibration-conditional validity in conformal outlier testing \citep{bates2023testing}.

Sequential testing predates AI-text detection \citep{wald1945sequential}. Howard et al.\ give time-uniform bounds through nonnegative supermartingales \citep{howard2020timeuniform}, and Koning and van Meer study how to induce sequential tests from terminal tests \citep{koning2026anytime}. Thus, deriving stopping-valid decisions from fixed-decision guarantees is not a new general principle. Our contribution is a specification of two such constructions for within-document detector selection, including the restrictions on routing and calibration reuse.

We restrict the application to passive detection of unwatermarked text. Generation-side watermarks \citep{kirchenbauer2023watermark} and conformal assessment of watermark-based AI assistance \citep{xie2026watermark} address related decisions with different evidence. We make no claim to originate conformal AI-use detection or foundational sequential-testing theory.

AdaDetectGPT learns a witness function from training data \citep{zhou2025adadetect}. Its adaptation concerns score learning, whereas our policies choose inspections within a test document after development. A learned score could enter our registered family, but its fixed-decision guarantee would not replace the pathwise error accounting below.

\subsection{Contributions}

We give two finite-sample rules with a shared document-level false-alert target. Construction A registers prefix--detector actions and fixes their error allocations before calibration. Its union-bound guarantee permits score-dependent action selection within that family (Theorem~\ref{thm:constructionA}). Construction B freezes an adaptive route generator and calibrates its complete-path maximum. Its guarantee permits early termination along that route without splitting the error budget (Theorem~\ref{thm:constructionB}).

For each construction, we derive the calibration count needed to permit rejection on the conformal rank grid. We distinguish this requirement from useful detection power and from offline scoring cost. Supporting results state the extra assumptions needed for oracle sample budgets, distribution-shift bounds, and independent audits. We also separate the stopping control proved here from an e-process construction. No detector experiments or computational savings are reported.

\section{Framework}\label{sec:framework}

Let \(X=(W_1,\ldots,W_L)\) denote one complete document, including its length and acquisition metadata. Fix a human-document reference law \(P_0\) and a family \(\mathcal Q\) of generated or AI-assisted alternatives. Under the null, \(X\sim P_0\). Rejecting \(P_0\) does not identify a member of \(\mathcal Q\), because other human populations can also differ from \(P_0\).

A policy \(\pi\) chooses an inspection extent and detector at each step and terminates within a finite action budget. It reports \emph{flag for review} if it crosses the alert boundary. Otherwise, including after a futility stop, it reports \emph{no alert at this budget}. We do not certify human authorship after non-rejection.

Write \(\tau\) for the number of executed actions, \(N_\tau\) for the number of distinct tokens inspected, and \(C_\tau\) for total computational cost. Several detectors may rescore the same tokens, so these quantities need not agree. For \(0<\alpha<1\), our target is
\begin{equation}
 \Prob_0\{\text{an alert occurs at any allowed action}\}\leq\alpha.
 \label{eq:target}
\end{equation}
Here \(\Prob_0\) denotes the joint null experiment, including development information, calibration documents, the new document, and algorithmic randomness. This is marginal document-level control. It does not condition on a realized calibration set, document length, subgroup, or selected route, and it does not control multiplicity across a collection of screened documents.

\begin{assumption}[Development separation and exchangeability]\label{ass:exch}
Fix score maps, model checkpoints, preprocessing, action families, transformations, and error weights using development information \(\Ddev\). Conditional on \(\Ddev\), the human calibration documents \(H_1,\ldots,H_m\) and a new null document \(X\) are exchangeable. Apply per-document randomization symmetrically, for example through independent seeds drawn from a common distribution independently of the documents. For Construction B, also fix the complete-route generator using \(\Ddev\), without using the held-out calibration documents or their summaries.
\end{assumption}

We take \(m\geq1\) and assume all score and selection maps are measurable. Assumption~\ref{ass:exch} permits arbitrary token dependence within a document. For randomized scores, we augment each document with its seed and apply the same fixed map to the augmented documents. These augmented documents remain exchangeable. Shared randomness fixed during development belongs to \(\Ddev\).

The proofs condition on \(\Ddev\) and average over it at the end. They do not invoke conformal validity conditional on a realized calibration sample. Construction A allows action selection from observed scores or calibration ranks; its fixed family and weights, rather than a calibration-independent route, provide the guarantee. Construction B requires the stronger routing restriction in Assumption~\ref{ass:exch}.

\section{A conformal ranking lemma}\label{sec:prelim}

\begin{lemma}[Conservative upper-tail ranks]\label{lem:ranks}
Let \(Z_1,\ldots,Z_{m+1}\) be exchangeable conditional on \(\Ddev\), with scores in the extended real line. Define
\begin{equation}
 R=1+\sum_{i=1}^m\ind\{Z_i\geq Z_{m+1}\},
 \qquad p=\frac{R}{m+1}.
 \label{eq:rank}
\end{equation}
For \(u\in[0,1]\),
\begin{equation}
 \Prob(p\leq u\mid\Ddev)
 \leq\frac{\lfloor(m+1)u\rfloor}{m+1}\leq u.
 \label{eq:rankbound}
\end{equation}
\end{lemma}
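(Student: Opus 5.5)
We follow the standard symmetrization argument for conformal p-values, carried out conditionally on \(\Ddev\). Put \(n=m+1\) and, for each \(j\in\{1,\ldots,n\}\), let \(R_j=\sum_{i=1}^{n}\ind\{Z_i\geq Z_j\}\). This is the number of scores at least as large as \(Z_j\), counting \(Z_j\) itself, so \(R_j\geq1\). Since \(\ind\{Z_{n}\geq Z_{n}\}=1\) always holds, even when \(Z_n=\pm\infty\), we have \(R=R_n\). Thus \(p\leq u\) exactly when \(R_n\leq\lfloor nu\rfloor\), because \(R_n\) is an integer.

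The first step uses exchangeability. The vector \((R_1,\ldots,R_n)\) is a fixed measurable permutation-equivariant function of \((Z_1,\ldots,Z_n)\). Hence, conditional on \(\Ddev\), the law of \(R_n\) equals the law of \(R_j\) for every \(j\). This gives
\begin{equation*}
\Prob(R_n\leq k\mid\Ddev)=\frac1n\,\Expect\Bigl[\#\{j:R_j\leq k\}\Bigm|\Ddev\Bigr]
\end{equation*}
for each integer \(k\).

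The second step, which is the only point needing care, is the deterministic counting bound \(\#\{j:R_j\leq k\}\leq k\) for every realization, including ties and infinite values. We argue by contradiction. Suppose the set \(S=\{j:R_j\leq k\}\) has more than \(k\) elements, and choose \(j^\ast\in S\) with the smallest score \(Z_{j^\ast}\) among indices in \(S\). The extended real line is totally ordered, so this minimum exists. Every \(i\in S\) then satisfies \(Z_i\geq Z_{j^\ast}\), so \(R_{j^\ast}\geq|S|>k\), which contradicts \(j^\ast\in S\). Ties only increase the ranks, which is why the upper-tail rank with \(\geq\) is conservative, and no tie-breaking randomization is needed.

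Taking \(k=\lfloor nu\rfloor\) and combining the two steps gives \(\Prob(p\leq u\mid\Ddev)\leq\lfloor(m+1)u\rfloor/(m+1)\). The final inequality \(\lfloor(m+1)u\rfloor/(m+1)\leq u\) holds because \(\lfloor x\rfloor\leq x\). The cases \(u<1/(m+1)\), where the bound is zero since \(R\geq1\), and \(u=1\) are covered automatically. I expect no real obstacle beyond stating the tie and \(\pm\infty\) handling cleanly in the counting step.
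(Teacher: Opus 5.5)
Your proof is correct, but it takes a different route from the paper's.

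The paper enlarges the probability space with independent \(\operatorname{Unif}(0,1)\) tie-breakers \(V_i\) and orders the pairs \((Z_i,V_i)\) lexicographically. The tie-broken descending rank \(R^*\) of the test pair is then exactly uniform on \(\{1,\ldots,m+1\}\). The pathwise inequality \(R\geq R^*\) finishes the proof, since any pair that beats the test pair lexicographically has \(Z_i\geq Z_{m+1}\).

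You stay on the original space instead. Exchangeability turns \(\Prob(R_n\leq k\mid\Ddev)\) into the average \(\Expect[\#\{j:R_j\leq k\}\mid\Ddev]/n\). Your minimum-score argument then gives the deterministic count \(\#\{j:R_j\leq k\}\leq k\), which absorbs ties and \(\pm\infty\) directly.

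What your version buys is that it needs no auxiliary randomization and no almost-sure distinctness step. The paper's uniformity claim for \(R^*\) in fact rests on the same averaging identity. There, your counting inequality is replaced by the equality that distinct ranks form a permutation. Your argument therefore also recovers exactness when there are no ties, because the count is then exactly \(k\).

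What the paper's coupling buys is an explicit picture of \(p\) as pathwise dominating an exactly uniform grid \(p\)-value. That makes the conservativeness under ties transparent and links directly to randomized conformal \(p\)-values. The distributional conclusion is the same either way.
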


\begin{proof}
Introduce independent \(V_1,\ldots,V_{m+1}\sim\operatorname{Unif}(0,1)\), independent of the scores and \(\Ddev\). Order \((Z_i,V_i)\) lexicographically from largest to smallest. The pairs remain exchangeable and have distinct ranks with probability one. The descending rank \(R^*\) of \((Z_{m+1},V_{m+1})\) is therefore uniform on \(\{1,\ldots,m+1\}\), conditional on \(\Ddev\).

The conservative rank \(R\) counts all calibration scores tied with the test score. The tie-broken rank counts a subset of those ties, so \(R\geq R^*\) almost surely. Hence
\begin{equation*}
 \begin{aligned}
 \Prob(p\leq u\mid\Ddev)
 &\leq\Prob\{R^*\leq\lfloor(m+1)u\rfloor\mid\Ddev\}\\
 &=\frac{\lfloor(m+1)u\rfloor}{m+1}\leq u.
 \end{aligned}
\end{equation*}
\end{proof}

This is the conservative conformal rank argument used in outlier calibration \citep{bates2023testing}. The auxiliary uniforms serve the proof; the implemented rank is deterministic given the scores. If all scores tie, \(p=1\). A constant score therefore cannot trigger an alert at \(\alpha<1\), regardless of the calibration count.

\section{Construction A: a registered family with an error budget}\label{sec:constructionA}

Using \(\Ddev\), fix a finite family \(\mathcal A\) of actions \(a=(k,j)\), where \(k\) indexes a token budget \(b_k\) and \(j\) a detector. Compute \(S_a(X)\) from the first \(\min(b_k,L)\) tokens, under a fixed inspection tokenizer and a fixed map with larger scores indicating greater suspicion. Use the same truncation and failure rules for calibration and test documents. For example, \(S_a(X)=-\infty\) on failure gives \(p_a(X)=1\) and prevents an alert at that action. Define each score on all documents, including documents on which a deployed run does not execute the action.

For \(a\in\mathcal A\), form
\begin{equation}
 p_a(X)=\frac{1+\sum_{i=1}^m\ind\{S_a(H_i)\geq S_a(X)\}}{m+1}.
 \label{eq:pa}
\end{equation}
Fix \(w_a\geq0\) with \(\sum_{a\in\mathcal A}w_a\leq1\), and set \(\alpha_a=\alpha w_a\). During deployment, choose an executed subset \(\mathcal E\subseteq\mathcal A\) and its order from observed scores or ranks. Alert when an executed action satisfies \(p_a\leq\alpha_a\). No other event authorizes an alert under this rule.

Keep the allocations fixed after observing calibration or test scores. Reuse an action's realized score or frozen seed if the action is revisited. A fresh randomized replicate requires its own registered action and allocation.

\begin{theorem}[False-alert control under action selection]\label{thm:constructionA}
Under Assumption~\ref{ass:exch}, Construction A satisfies \eqref{eq:target} for any measurable rule selecting the executed subset and its order from the registered family. The action scores may be dependent.
\end{theorem}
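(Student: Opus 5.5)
The plan is to reduce the adaptive event to a union of fixed events, one per registered action, and then bound each with Lemma~\ref{lem:ranks}. The key observation is containment. Whatever measurable rule chooses the executed subset \(\mathcal E\) and its order, an alert can occur only if some executed action satisfies \(p_a(X)\leq\alpha_a\). Since \(\mathcal E\subseteq\mathcal A\), we get
\begin{equation*}
 \{\text{an alert occurs}\}\subseteq\bigcup_{a\in\mathcal A}\{p_a(X)\leq\alpha_a\}.
\end{equation*}
Every score \(S_a\) is defined on all documents, including those on which the deployed run does not execute \(a\). Hence the right-hand events are well defined and do not depend on the selection rule, its order, or any randomness in routing. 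This is exactly where the fixed family and the fixed allocations \(\alpha_a=\alpha w_a\) enter: the thresholds must not depend on calibration or test scores, or the events on the right would not be fixed in advance.

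For each fixed \(a\), we work conditional on \(\Ddev\). Set \(Z_i=S_a(H_i)\) for \(i\leq m\) and \(Z_{m+1}=S_a(X)\). Because \(S_a\) is a measurable map fixed by \(\Ddev\), it is applied identically to every document, with the same truncation, failure rule, and augmented seed. Exchangeability of \((H_1,\ldots,H_m,X)\) under Assumption~\ref{ass:exch} therefore transfers to \((Z_1,\ldots,Z_{m+1})\). Lemma~\ref{lem:ranks} then gives \(\Prob(p_a(X)\leq\alpha_a\mid\Ddev)\leq\alpha_a\). If \(w_a=0\), the bound still holds, because \(p_a\geq 1/(m+1)>0\).

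The union bound then yields
\begin{equation*}
 \Prob_0\{\text{alert}\mid\Ddev\}\leq\sum_{a\in\mathcal A}\alpha w_a\leq\alpha.
\end{equation*}
This step needs no assumption about dependence among the \(S_a\), which covers the clause allowing dependent scores. Taking expectation over \(\Ddev\) gives \eqref{eq:target}.

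The main point requiring care is the containment step when the rule is adaptive or randomized, or when an action is revisited. A rule might choose actions after seeing ranks, or revisit an action, so one must check that this cannot create an alert event outside the union. The rule's constraints settle this. A revisited action reuses its realized score or frozen seed, so its \(p_a\) is unchanged. A fresh replicate must be registered as its own action with its own allocation. Consequently every alert is witnessed by a single fixed pair \((a,\alpha_a)\), and the containment holds pathwise for every realization of the selection rule.
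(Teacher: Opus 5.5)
Your proposal is correct and follows the paper's proof essentially step for step. It uses pathwise containment of the alert event in the union of the fixed events \(\{p_a\leq\alpha_a\}\) over the registered family, Lemma~\ref{lem:ranks} applied per action conditional on \(\Ddev\), a union bound that needs no independence between actions, and averaging over \(\Ddev\); your remarks on revisited actions, fresh replicates, and the \(w_a=0\) case make explicit the construction rules the paper states just before the theorem.
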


\begin{proof}
Conditional on \(\Ddev\), fix \(a\in\mathcal A\). Applying Lemma~\ref{lem:ranks} to its calibration and test scores gives
\begin{equation}
 \Prob_0(p_a\leq\alpha_a\mid\Ddev)\leq\alpha_a=\alpha w_a.
 \label{eq:perAction}
\end{equation}
On each realization,
\begin{equation}
 \begin{aligned}
 \{\text{alert}\}
 &\subseteq\bigcup_{a\in\mathcal A}
       \bigl(\{a\in\mathcal E\}\cap\{p_a\leq\alpha_a\}\bigr)\\
 &\subseteq\bigcup_{a\in\mathcal A}\{p_a\leq\alpha_a\}.
 \end{aligned}
 \label{eq:containment}
\end{equation}
The final union ranges over the fixed family, regardless of execution. Taking conditional probabilities and applying \eqref{eq:perAction},
\begin{equation*}
 \begin{aligned}
 \Prob_0(\text{alert}\mid\Ddev)
 &\leq\sum_{a\in\mathcal A}\Prob_0(p_a\leq\alpha_a\mid\Ddev)\\
 &\leq\alpha\sum_{a\in\mathcal A}w_a\leq\alpha.
 \end{aligned}
\end{equation*}
Average over \(\Ddev\) to obtain \eqref{eq:target}. We use neither independence between actions nor a rank bound conditional on execution.
\end{proof}

\begin{remark}[Selection versus reallocation]
Theorem~\ref{thm:constructionA} applies weighted Bonferroni testing to fixed conformal ranks. The executed-subset containment preserves that guarantee under action selection; it is not a new multiple-testing inequality. The theorem does not justify choosing the weights after observing ranks. For independent null \(p\)-values \(U_1,U_2\sim\operatorname{Unif}(0,1)\), assign all weight to the smaller value and reject at level \(\alpha\). The error is \(1-(1-\alpha)^2>\alpha\). Although the selected weights sum to one, the thresholds depend on the evidence.
\end{remark}

\begin{corollary}[Calibration resolution]\label{cor:resolutionA}
The rank in \eqref{eq:pa} is at least \(1/(m+1)\). For \(w_a>0\), rejection through action \(a\) requires
\begin{equation}
 m\geq\left\lceil\frac{1}{\alpha w_a}\right\rceil-1.
 \label{eq:resA}
\end{equation}
An action with \(w_a=0\) cannot reject for any finite \(m\).
\end{corollary}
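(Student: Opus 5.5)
The proof works directly on the rank grid of \eqref{eq:pa}, with no probability argument. First, the numerator of \(p_a(X)\) is \(1+\sum_{i=1}^m\ind\{S_a(H_i)\geq S_a(X)\}\), which is an integer between \(1\) and \(m+1\). Hence \(p_a(X)\) lies in \(\{r/(m+1):r=1,\ldots,m+1\}\), and in particular \(p_a(X)\geq 1/(m+1)\) on every realization. This gives the first claim of Corollary~\ref{cor:resolutionA}.

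Next, consider an action with \(w_a>0\). Rejection through \(a\) requires the event \(\{p_a\leq\alpha_a\}\) to be nonempty. By the lower bound, this forces \(1/(m+1)\leq\alpha w_a\), that is, \(m+1\geq 1/(\alpha w_a)\). Since \(m+1\) is an integer, this is equivalent to \(m+1\geq\lceil 1/(\alpha w_a)\rceil\), which is \eqref{eq:resA}.

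The statement is a necessary condition, so we need only this direction. It is still worth checking that the bound is sharp. If all calibration scores fall strictly below the test score, then \(p_a=1/(m+1)\), and this achieves rejection whenever \eqref{eq:resA} holds. This step is optional.

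Finally, consider \(w_a=0\). Then \(\alpha_a=0\), while \(p_a\geq 1/(m+1)>0\) for every finite \(m\). So \(\{p_a\leq 0\}\) is empty and action \(a\) can never alert.

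The only delicate point is the ceiling manipulation. We need the equivalence that \(n\geq x\) if and only if \(n\geq\lceil x\rceil\) for an integer \(n\) and real \(x\), applied with \(n=m+1\). We also need the convention that \(m\geq1\) is an integer. Beyond that, the argument uses only the deterministic structure of \eqref{eq:pa} and does not invoke Lemma~\ref{lem:ranks} or exchangeability.
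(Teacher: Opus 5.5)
Your proposal is correct and matches the paper's proof essentially step for step: the numerator of \eqref{eq:pa} is at least one, so rejection forces $1/(m+1)\leq\alpha w_a$; the integer ceiling step gives \eqref{eq:resA}; and for $w_a=0$ the threshold is zero while the rank is positive. Your explicit ceiling equivalence and the optional sharpness check are harmless additions that the paper leaves implicit.
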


\begin{proof}
The numerator of \eqref{eq:pa} is at least one. Thus, rejection requires \(1/(m+1)\leq\alpha w_a\). Solving for the integer \(m\) gives \eqref{eq:resA}. For \(w_a=0\), the threshold is zero and the rank is positive.
\end{proof}

For twelve equally weighted actions, such as four token budgets and three detectors, rejection requires \(m\geq1{,}199\) at \(\alpha=0.01\), or \(m\geq11{,}999\) at \(\alpha=0.001\). These counts permit rejection on the rank grid; they do not ensure that a score attains the minimum rank or has useful power. Treating dependent prefixes from one document as separate calibration units would require a different exchangeability argument.

\section{Construction B: calibrating the complete policy path}\label{sec:constructionB}

Construction B uses one calibrated scalar instead of allocating error across actions. Using \(\Ddev\), fix a measurable finite-horizon route generator \(\pi\) and transformations \(g_a\). For successful scores, \(g_a\) may be the negative logarithm of a positive tail rank computed against a separate development reference set. Reserve \(-\infty\) as the failure sentinel and define \(g_a(-\infty)=-\infty\). Check for failure before evaluating the successful-score transformation. Apply this convention to calibration and test documents: a failed action must not add alert evidence.

For a document \(X\), define the complete route by running \(\pi\) to its planned terminal action \(T_\pi(X)\), with alert-based stopping disabled and development-fixed futility rules retained. Let
\begin{equation}
 M_\pi(X)=\max_{1\leq t\leq T_\pi(X)}
       g_{a_t(X)}\{S_{a_t(X)}(X)\}.
 \label{eq:Mpi}
\end{equation}
Set \(M_\pi(X)=-\infty\) for an empty route. The next action may depend on document features and scores observed so far. The route generator, transformations, and randomization scheme must not use the held-out calibration documents or their summaries.

Compute the complete route for each calibration document. At deployment, follow an initial segment of the prescribed route. Early termination may depend on a calibrated alert or a resource or futility decision, but calibration information must not change the next action, transformation, or seed. The complete test route is a mathematical reference for the proof; deployment need not compute its unvisited actions.

For an executed action \(t\), write
\begin{equation*}
 M_{\pi,t}(X)=\max_{1\leq s\leq t}
       g_{a_s(X)}\{S_{a_s(X)}(X)\},
\end{equation*}
and compare this partial maximum with the complete calibration maxima:
\begin{equation}
 p_t^{\mathrm{path}}(X)=
 \frac{1+\sum_{i=1}^m\ind\{M_\pi(H_i)\geq M_{\pi,t}(X)\}}{m+1}.
 \label{eq:ppath}
\end{equation}
Alert at the first executed action with \(p_t^{\mathrm{path}}\leq\alpha\). An empty deployed route produces no alert.

\begin{theorem}[Complete-path conformal stopping]\label{thm:constructionB}
Under Assumption~\ref{ass:exch}, let \(M_\pi\) be the fixed measurable complete-path score in \eqref{eq:Mpi}, using the same per-document randomization scheme for calibration and test documents. If deployment follows a prefix of that finite route, Construction B satisfies \eqref{eq:target} for arbitrary early termination along the route.
\end{theorem}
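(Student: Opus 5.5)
The plan is to reduce the stopping problem to a single application of Lemma~\ref{lem:ranks} on the complete-path scores \(Z_i=M_\pi(H_i)\) for \(i\leq m\) and \(Z_{m+1}=M_\pi(X)\). Throughout I condition on \(\Ddev\) and average at the end, as in the proof of Theorem~\ref{thm:constructionA}.

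\emph{Step 1: exchangeability of the complete-path scores.} Under Assumption~\ref{ass:exch}, the route generator \(\pi\), the transformations \(g_a\), the scores \(S_a\) and the randomization scheme are fixed by \(\Ddev\) and do not depend on the calibration documents. Consequently \(M_\pi\) is one fixed measurable map applied to each (seed-augmented) document. The augmented documents \(H_1,\ldots,H_m,X\) are exchangeable conditional on \(\Ddev\). Applying the same map to each of them therefore yields exchangeable values \(M_\pi(H_1),\ldots,M_\pi(H_m),M_\pi(X)\). This is the point where the routing restriction matters. If calibration summaries influenced the next action, the transformation or the seed, then the test route would be a different map from the calibration routes, and exchangeability would fail.

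\emph{Step 2: monotone domination.} The deployed route is an initial segment of the complete route, because early termination only truncates it and alert-based stopping does not alter earlier actions. Hence every executed \(t\) satisfies \(t\leq T_\pi(X)\) and uses the same actions \(a_1(X),\ldots,a_t(X)\) as the complete route. It follows that \(M_{\pi,t}(X)\leq M_\pi(X)\). The failure convention \(g_a(-\infty)=-\infty\) keeps failed actions neutral and preserves this inequality. The indicator \(\ind\{M_\pi(H_i)\geq c\}\) is nonincreasing in \(c\), so
\[
p_t^{\mathrm{path}}(X)\geq p^\ast:=\frac{1+\sum_{i=1}^m\ind\{M_\pi(H_i)\geq M_\pi(X)\}}{m+1}
\]
for every executed \(t\). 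An empty route produces no alert by definition.

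\emph{Step 3: conclusion.} The alert event is therefore contained in \(\{\exists\,t\text{ executed}: p_t^{\mathrm{path}}\leq\alpha\}\subseteq\{p^\ast\leq\alpha\}\). This containment holds pointwise, whatever the stopping rule, including stopping rules that depend on calibrated alerts. Lemma~\ref{lem:ranks} gives \(\Prob_0(p^\ast\leq\alpha\mid\Ddev)\leq\alpha\), and averaging over \(\Ddev\) yields \eqref{eq:target}.

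I expect Step 1 to be the main obstacle. It requires carefully justifying that the complete test route is defined without reference to calibration data, even though deployment stopping may consult calibration ranks. Step 2's prefix property then requires checking that such stopping changes only \emph{when} to stop and never the identity of the next action. I would handle both by defining the complete route formally as a function of \((X,\text{seed},\Ddev)\) alone, and defining deployment as that route composed with an arbitrary stopping time.
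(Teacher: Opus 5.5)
Your proposal is correct and follows the paper's proof essentially step for step. Both apply Lemma~\ref{lem:ranks} to the exchangeable complete-path maxima \(M_\pi(H_1),\ldots,M_\pi(H_m),M_\pi(X)\), then use path consistency \(M_{\pi,t}(X)\leq M_\pi(X)\) to get \(p_t^{\mathrm{path}}(X)\geq p^{\mathrm{full}}(X)\), and finally contain the alert event pointwise in \(\{p^{\mathrm{full}}(X)\leq\alpha\}\) for any stopping rule before averaging over \(\Ddev\).
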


\begin{proof}
Applying the same map \(M_\pi\) to the augmented calibration and test documents preserves exchangeability conditional on \(\Ddev\). By Lemma~\ref{lem:ranks}, the terminal rank
\begin{equation*}
 p^{\mathrm{full}}(X)=
 \frac{1+\sum_{i=1}^m\ind\{M_\pi(H_i)\geq M_\pi(X)\}}{m+1}
\end{equation*}
satisfies \(\Prob_0(p^{\mathrm{full}}\leq\alpha\mid\Ddev)\leq\alpha\).

Path consistency gives \(M_{\pi,t}(X)\leq M_\pi(X)\) at each executed action. A lower test score has at least as many dominating calibration scores, so
\begin{equation*}
 p_t^{\mathrm{path}}(X)\geq p^{\mathrm{full}}(X).
\end{equation*}
Writing \(\rho=\tau\leq T_\pi(X)\) for the deployed stopping index, where \(\tau\) is the executed-action count from Section~\ref{sec:framework},
\begin{equation}
 \{\exists t\leq\rho:p_t^{\mathrm{path}}(X)\leq\alpha\}
 \subseteq\{p^{\mathrm{full}}(X)\leq\alpha\}.
 \label{eq:containB}
\end{equation}
The left event is empty when \(\rho=0\). Otherwise the containment bounds its conditional probability by \(\alpha\). Averaging over \(\Ddev\) proves \eqref{eq:target}. No optional-stopping theorem for the partial ranks is needed.
\end{proof}

\begin{remark}[Calibration reuse]
Calibrating against matching prefix maxima instead of complete maxima changes the reference scores and loses the containment argument in \eqref{eq:containB}. A changed policy needs calibration scores from its own complete routes. Recomputing on the same calibration documents is valid under this argument if the new policy was chosen without those documents or their summaries. Calibration-informed policy tuning requires fresh independent calibration or another validity argument.
\end{remark}

\paragraph*{Empty routes and failed actions.}
An empty route produces no alert. If all executed actions fail, then \(M_{\pi,t}(X)=-\infty\), so every complete calibration maximum dominates it and \(p_t^{\mathrm{path}}(X)=1\). In particular, with \(m=99\) empty calibration routes and one failed test action, the rank is one, not \(1/100\). Evaluating a failure through a tail-rank transformation without the sentinel override could map it to zero and produce the latter rank. A failure after a successful action leaves the running maximum unchanged; the development-fixed policy may continue or stop without an alert.

\paragraph*{Relationship to induced sequential tests.}
Koning and van Meer's Theorem~1 induces a sequential test by conditioning a terminal test on the observed information under a specified null law \citep{koning2026anytime}. To compare, pad the complete route to its fixed finite horizon and let \(\mathcal F_t\) contain development information, calibration data, and the observations available through action \(t\). Set
\[
 \phi_t=\ind\{p_t^{\mathrm{path}}\leq\alpha\},
 \qquad \phi_{\mathrm{full}}=\ind\{p^{\mathrm{full}}\leq\alpha\}.
\]
Use \(\phi_t=0\) before any action and retain the terminal value after the complete route. For each joint null law \(\mathsf P\) satisfying our assumptions, containment and \(\mathcal F_t\)-measurability give
\[
 \phi_t\leq\Expect_{\mathsf P}(\phi_{\mathrm{full}}\mid\mathcal F_t).
\]
Thus B implements an observable conservative lower bound on that induced test, not its conditional expectation. Our proof does not require evaluating a joint null model or conditioning the false-alert target on the realized calibration set. The contribution is this complete-path conformal implementation with its routing restrictions, not a new induction principle. We claim neither equality with the induced test nor admissibility or unrestricted optional continuation.

\begin{corollary}[Calibration resolution]\label{cor:resolutionB}
Rejection in Construction B requires \(m\geq\lceil1/\alpha\rceil-1\).
\end{corollary}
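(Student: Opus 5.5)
The plan mirrors the proof of Corollary~\ref{cor:resolutionA}, specialized to the single calibrated scalar of Construction B. First I will bound the path rank from below. In \eqref{eq:ppath}, the numerator is \(1\) plus a count of nonnegative indicators, so it is at least one. Hence \(p_t^{\mathrm{path}}(X)\geq 1/(m+1)\) at every executed action \(t\). This holds for any realization of the calibration maxima, the partial test maximum, and the route. It also covers the failure-sentinel case \(M_{\pi,t}(X)=-\infty\), where the rank equals one.

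Second, I will translate the alert rule into a condition on \(m\). Construction B alerts only at an executed action with \(p_t^{\mathrm{path}}\leq\alpha\), and an empty route never alerts. Any rejection therefore requires some \(t\) with \(1/(m+1)\leq p_t^{\mathrm{path}}\leq\alpha\), which gives \(m+1\geq 1/\alpha\). Because \(m\) is an integer, this is equivalent to \(m+1\geq\lceil 1/\alpha\rceil\), that is, \(m\geq\lceil1/\alpha\rceil-1\).

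Third, I will remark that the whole budget \(\alpha\) applies to one scalar, so no weight \(w_a\) enters, unlike \eqref{eq:resA}. The condition is necessary but not sufficient. Rejection also requires the test partial maximum to strictly exceed every complete calibration maximum, and that happens only when \(1/(m+1)\leq\alpha\) is attained on the rank grid.

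There is no genuine obstacle here. The only care needed is the integer step: \(m\geq 1/\alpha-1\) must be written as a ceiling, which is correct even when \(1/\alpha\) is an integer. I will also note that the bound does not depend on route length or on early stopping, since it uses only the numerator of \eqref{eq:ppath}.
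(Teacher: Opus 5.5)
Your argument is correct and matches the paper's proof: the numerator of \eqref{eq:ppath} is at least one, so rejection forces \(1/(m+1)\leq\alpha\), and integrality of \(m\) gives the ceiling. Your third-paragraph side remark is overstated, though the corollary does not depend on it: rejection requires the partial maximum to strictly exceed every complete calibration maximum only when \(\lfloor\alpha(m+1)\rfloor=1\); in general, up to \(\lfloor\alpha(m+1)\rfloor-1\) calibration maxima may dominate \(M_{\pi,t}(X)\) while still giving \(p_t^{\mathrm{path}}\leq\alpha\).
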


\begin{proof}
Equation~\eqref{eq:ppath} is at least \(1/(m+1)\). Rejection therefore requires \(1/(m+1)\leq\alpha\), giving the stated integer bound.
\end{proof}

At \(\alpha=0.01\) and \(0.001\), the corresponding counts are \(99\) and \(999\). Compared with twelve equal allocations in Construction A, these are lower rank-resolution requirements. They do not establish lower offline cost or greater power. Construction B requires complete calibration routes, and a broader path can raise its calibrated maximum. A constant complete-path score still gives \(p_t^{\mathrm{path}}=1\) at each executed action.

\section{Supporting bounds}\label{sec:supporting}

\subsection{An oracle total-variation identity}

\begin{theorem}[Oracle testing bound]\label{thm:oracle}
For two document laws \(P_0,Q\), let \(P_n=P_0^{\otimes n}\) and \(Q_n=Q^{\otimes n}\). Then
\begin{equation}
 \inf_{0\leq\phi\leq1}
 \{\Expect_{P_n}\phi+\Expect_{Q_n}(1-\phi)\}
 =1-\TV(P_n,Q_n),
 \label{eq:oracle}
\end{equation}
where the infimum ranges over measurable randomized rejection rules and
\(\TV(P,Q)=\sup_B|P(B)-Q(B)|\).
\end{theorem}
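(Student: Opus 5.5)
The identity is the standard Le Cam characterization of the minimal sum of type I and type II errors. The product structure plays no role, so we prove it for arbitrary laws \(P,Q\) on a common measurable space and then substitute \(P=P_n\), \(Q=Q_n\). First we fix a dominating measure, for example \(\mu=(P+Q)/2\), and write \(f=dP/d\mu\) and \(g=dQ/d\mu\). For any measurable \(0\leq\phi\leq1\),
\[
 \Expect_P\phi+\Expect_Q(1-\phi)=1-\int\phi\,(g-f)\,d\mu .
\]
Minimizing the left side is therefore equivalent to maximizing \(\int\phi(g-f)\,d\mu\) over \(0\leq\phi\leq1\).

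For the lower bound on the risk, we use the pointwise inequality \(\phi(g-f)\leq(g-f)_+\), which holds because \(0\leq\phi\leq1\). It gives \(\int\phi(g-f)\,d\mu\leq\int(g-f)_+\,d\mu\). For attainment, we take the likelihood-ratio indicator \(\phi^*=\ind\{g>f\}\), which is a nonrandomized measurable rule and achieves equality. Hence the infimum equals \(1-\int(g-f)_+\,d\mu\), and it is attained.

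It remains to identify \(\int(g-f)_+\,d\mu\) with \(\TV(P,Q)=\sup_B|P(B)-Q(B)|\). Since \(\int(g-f)\,d\mu=0\), the positive and negative parts have equal integrals, so \(\int(g-f)_+\,d\mu=\tfrac12\int|g-f|\,d\mu\). For any measurable \(B\), \(Q(B)-P(B)=\int_B(g-f)\,d\mu\) lies between \(-\int(g-f)_-\,d\mu\) and \(\int(g-f)_+\,d\mu\). This gives \(|P(B)-Q(B)|\leq\int(g-f)_+\,d\mu\), with equality at \(B=\{g>f\}\). Therefore the supremum defining \(\TV\) equals \(\int(g-f)_+\,d\mu\), and substituting \(P_n,Q_n\) yields \eqref{eq:oracle}.

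The main point requiring care is a normalization convention rather than a genuine obstacle. The paper defines \(\TV\) as a supremum over events, not as the full \(L^1\) distance, so the identity holds without a factor of \(\tfrac12\) only because \(\int(g-f)_+\,d\mu\) equals that supremum. I would also remark that allowing randomized \(\phi\) does not lower the infimum, since the optimum is attained by the indicator \(\phi^*\). Finally, under the paper's framework the rules \(\phi\) act on the \(n\) documents jointly, and measurability on the product space is all that is needed.
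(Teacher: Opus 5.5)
Your proof is correct and follows essentially the same route as the paper. Both write the risk as \(1-\int\phi(g-f)\) against a dominating measure, bound it by \(\int(g-f)_+\), attain that bound with \(\ind\{g>f\}\), and use \(\int(g-f)=0\) to identify the integral with \(\TV\). Your choice of \(\mu=(P+Q)/2\) rather than \(P_n+Q_n\) is immaterial. Your explicit two-sided bound on \(Q(B)-P(B)\) shows more transparently than the paper why the one-sided supremum equals the absolute-value supremum in the definition of \(\TV\).
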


\begin{proof}
For a deterministic rule \(\phi=\ind_B\), total error equals
\begin{equation*}
 P_n(B)+Q_n(B^c)=1-\{Q_n(B)-P_n(B)\}.
\end{equation*}
Let \(\lambda=P_n+Q_n\), \(f=\mathrm{d}P_n/\mathrm{d}\lambda\), and \(g=\mathrm{d}Q_n/\mathrm{d}\lambda\). Since \(\int(g-f)\,\mathrm{d}\lambda=0\), the positive and negative parts have equal integrals. Choosing \(B^*=\{g>f\}\) therefore gives
\begin{equation*}
 \sup_B\{Q_n(B)-P_n(B)\}
 =\int(g-f)_+\,\mathrm{d}\lambda=\TV(P_n,Q_n).
\end{equation*}
For a randomized rule \(0\leq\phi\leq1\),
\begin{equation*}
 \begin{aligned}
 \Expect_{P_n}\phi+\Expect_{Q_n}(1-\phi)
 &=1-\int\phi(g-f)\,\mathrm{d}\lambda\\
 &\geq1-\TV(P_n,Q_n).
 \end{aligned}
\end{equation*}
Equality holds at \(\phi=\ind_{B^*}\).
\end{proof}

The identity holds for any two joint laws. Independence enters when we write those laws as products. It does not convert the number of tokens in one document into an independent-replicate count. A learned detector may also discard information used by the oracle rule.

\subsection{A Hoeffding budget under independence}

\begin{proposition}[Illustrative oracle budget]\label{prop:hoeffding}
Let \(Z_1,\ldots,Z_n\in[0,1]\) be independent under each hypothesis, with common null mean \(\mu_0\) and common alternative mean \(\mu_0+\Delta\), where \(\Delta>0\). Suppose these means are known. Reject if \(\bar Z_n\geq\mu_0+\Delta/2\). Each error probability is at most \(\exp(-n\Delta^2/2)\). For \(0<\alpha,\beta<1\),
\begin{equation}
 n\geq\frac{2}{\Delta^2}
       \max\!\left\{\log\frac1\alpha,\log\frac1\beta\right\}
 \label{eq:hoeffbudget}
\end{equation}
suffices for level \(\alpha\) and power at least \(1-\beta\).
\end{proposition}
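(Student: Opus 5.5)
The proof applies Hoeffding's inequality to the sample mean \(\bar Z_n=n^{-1}\sum_{i=1}^n Z_i\) under each hypothesis, using the threshold \(\mu_0+\Delta/2\), which is at distance \(\Delta/2\) from both means. Under the null, the \(Z_i\) are independent, take values in \([0,1]\), and have common mean \(\mu_0\). The one-sided Hoeffding bound gives
\begin{equation*}
 \Prob_0\{\bar Z_n-\mu_0\geq\Delta/2\}\leq\exp\{-2n(\Delta/2)^2\}=\exp(-n\Delta^2/2).
\end{equation*}
Under the alternative, the mean is \(\mu_0+\Delta\). A type II error is the event \(\bar Z_n<\mu_0+\Delta/2\), which is contained in \(\{\bar Z_n-(\mu_0+\Delta)\leq-\Delta/2\}\). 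The lower-tail Hoeffding bound gives the same quantity \(\exp(-n\Delta^2/2)\). This establishes the claim that each error probability is at most \(\exp(-n\Delta^2/2)\). Independence is used only within each hypothesis, and no identical distribution is needed beyond the common mean. Hoeffding only needs the average of the means, so a common mean is more than enough.

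For the budget, I need \(\exp(-n\Delta^2/2)\leq\alpha\), which is equivalent to \(n\geq(2/\Delta^2)\log(1/\alpha)\). Likewise, \(\exp(-n\Delta^2/2)\leq\beta\) is equivalent to \(n\geq(2/\Delta^2)\log(1/\beta)\). Both hold whenever \(n\) is at least the maximum in \eqref{eq:hoeffbudget}. So the test has level \(\alpha\), and its power is \(1-\Prob_Q\{\text{no rejection}\}\geq1-\beta\).

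There is no substantive obstacle here. The only care point is the boundary case \(\bar Z_n=\mu_0+\Delta/2\). Rejection includes equality, so the null bound must be for the event \(\geq\), which Hoeffding covers. The alternative bound concerns the strict event \(<\), which is contained in the non-strict event that Hoeffding bounds. Since \(\Delta\leq1\) is forced by the means lying in \([0,1]\), no degenerate case arises. I would also remark, as the paper does after Theorem~\ref{thm:oracle}, that independence is a modeling assumption. The \(n\) here counts independent replicates, not tokens within one document.
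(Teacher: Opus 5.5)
Your proof is correct and follows the paper's argument essentially step for step: you apply the one-sided Hoeffding bounds at deviation \(\Delta/2\) under each hypothesis to get \(\exp(-n\Delta^2/2)\), then solve each bound for \(n\). Your boundary handling also matches the paper, which likewise bounds the null rejection event as an upward deviation of at least \(\Delta/2\) and the alternative non-rejection event as a strictly larger downward deviation covered by the lower-tail bound.
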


\begin{proof}
Hoeffding's inequality \citep{hoeffding1963probability} gives
\begin{equation*}
 \Prob(\bar Z_n-\Expect\bar Z_n\geq\varepsilon)
 \leq \exp(-2n\varepsilon^2),
\end{equation*}
with the same bound for the lower tail. Under the null, rejection requires an upward deviation of at least \(\Delta/2\), so its probability is at most
\begin{equation*}
 \exp\{-2n(\Delta/2)^2\}=\exp(-n\Delta^2/2).
\end{equation*}
Under the alternative, non-rejection requires a downward deviation greater than \(\Delta/2\); the lower-tail bound applies. Requiring the two bounds to be at most \(\alpha\) and \(\beta\), respectively, gives \eqref{eq:hoeffbudget}, rounded up to an integer.
\end{proof}

Independent bounded scores with the stated means suffice; identical score distributions are unnecessary. Applying this budget to learned conformal rules would require accounting for mean estimation, score selection, and repeated inspection. Applying it to tokens would require assumptions on their dependence beyond Assumption~\ref{ass:exch}.

\subsection{Distribution shift}

\begin{proposition}[Shift bound averaged over calibration]\label{prop:shift}
Let \(C\) collect the fitted calibration mechanism and algorithmic randomness, with law \(\mu\). Suppose the joint reference and shifted experiments are \(\mu\otimes P_0\) and \(\mu\otimes\widetilde P_0\), so the test document is independent of \(C\) under both laws. Assume reference validity:
\begin{equation*}
 \Prob_{\mu\otimes P_0}(\text{alert})\leq\alpha.
\end{equation*}
For a measurable alert event with document section \(A_c\) at \(C=c\),
\begin{equation}
 \Prob_{\mu\otimes\widetilde P_0}(\text{alert})
 \leq\min\{1,\alpha+\TV(P_0,\widetilde P_0)\}.
 \label{eq:shift}
\end{equation}
\end{proposition}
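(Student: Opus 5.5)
The plan is to condition on the calibration mechanism \(C\) and compare the two document laws section by section, using Fubini for the product structure. Because the alert event \(E\) is measurable in \((c,x)\), each section \(A_c=\{x:(c,x)\in E\}\) is measurable. Because the joint laws are products, Tonelli's theorem gives
\begin{equation*}
 \Prob_{\mu\otimes P_0}(\text{alert})=\int P_0(A_c)\,\mu(\mathrm{d}c),
 \qquad
 \Prob_{\mu\otimes\widetilde P_0}(\text{alert})=\int \widetilde P_0(A_c)\,\mu(\mathrm{d}c).
\end{equation*}
Independence of the test document from \(C\) under both laws is exactly what makes the same mixing measure \(\mu\) appear in both integrals. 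This is the step where that assumption is used.

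Next I would bound each section pointwise. For every \(c\), the definition \(\TV(P,Q)=\sup_B|P(B)-Q(B)|\) applied with \(B=A_c\) gives
\begin{equation*}
 \widetilde P_0(A_c)\leq P_0(A_c)+\TV(P_0,\widetilde P_0).
\end{equation*}
The right-hand TV term does not depend on \(c\). Integrating against \(\mu\), which is a probability measure, therefore yields
\begin{equation*}
 \Prob_{\mu\otimes\widetilde P_0}(\text{alert})\leq\Prob_{\mu\otimes P_0}(\text{alert})+\TV(P_0,\widetilde P_0).
\end{equation*}
Reference validity then bounds the right side by \(\alpha+\TV(P_0,\widetilde P_0)\). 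The cap at one holds trivially, since the left side is a probability, and together these give \eqref{eq:shift}.

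The only delicate point is measurability. The alert event must be jointly measurable so that its sections are measurable and the integrals make sense. I would state this explicitly, as the proposition already does by assuming a measurable alert event. I would also remark that the bound is marginal over \(C\): it does not hold conditionally on a realized calibration set unless the reference validity itself holds conditionally.
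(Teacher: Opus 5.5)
Your proof is correct and matches the paper's argument: write both alert probabilities as \(\mu\)-integrals of the section probabilities \(P_0(A_c)\) and \(\widetilde P_0(A_c)\), bound each pointwise by \(P_0(A_c)+\TV(P_0,\widetilde P_0)\), integrate, apply reference validity, and cap at one. Your explicit appeal to Tonelli and joint measurability spells out the integral representation that the paper uses without comment.
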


\begin{proof}
Set \(q_0(c)=P_0(A_c)\), \(\widetilde q(c)=\widetilde P_0(A_c)\), and \(\delta=\TV(P_0,\widetilde P_0)\). For each \(c\), the definition of total variation gives \(\widetilde q(c)\leq q_0(c)+\delta\). Integrating,
\begin{equation*}
 \begin{aligned}
 \Prob_{\mu\otimes\widetilde P_0}(\text{alert})
 &=\int\widetilde q(c)\,\mu(\mathrm{d}c)\\
 &\leq\int q_0(c)\,\mu(\mathrm{d}c)+\delta\leq\alpha+\delta.
 \end{aligned}
\end{equation*}
The left side is also at most one.
\end{proof}

Theorems~\ref{thm:constructionA} and~\ref{thm:constructionB} supply reference validity when their assumptions hold for the reference experiment. They supply an average bound, not \(q_0(c)\leq\alpha\) for each realized calibration mechanism. The product-law condition in Proposition~\ref{prop:shift} is an additional assumption: document-level exchangeability alone does not imply independence of a new document and its calibration set. More general departures from exchangeability need further structure \citep{barber2023conformal}.

\begin{example}[Marginal versus calibration-conditional error]\label{ex:conditional}
Let \(H,X\) be independent \(\operatorname{Unif}(0,1)\) variables and use the identity score \(S(z)=z\). With one calibration score and \(\alpha=1/2\),
\[
 p(X)=\frac{1+\ind\{H\geq X\}}{2},
 \qquad \{p(X)\leq1/2\}=\{X>H\}.
\]
Its marginal error is \(\Prob(X>H)=1/2\). Conditional on \(H=h\), its error is \(1-h\), which exceeds \(1/2\) for \(h<1/2\). For a fixed mechanism \(c\), the shift argument gives \(\min\{1,q_0(c)+\delta\}\), not \(\min\{1,\alpha+\delta\}\).
\end{example}

The shift bound holds as an inequality even if \(\delta\) is unknown. A numerical deployment certificate requires a justified upper bound on that distance, obtained from assumptions or data. A small deployment sample alone does not provide such a certificate for an unrestricted document distribution.

\subsection{A zero-event audit bound}

\begin{proposition}[Clopper--Pearson audit bound]\label{prop:audit}
Fix a fitted pipeline and an audit size \(N_0\geq1\) before observing audit outcomes. Use \(N_0\) independent and identically distributed human audit documents, independent of fitting data. If scoring is randomized, use independent per-document randomization from a common fixed law. Conditional on the fitted pipeline, the false-alert count is \(K\sim\operatorname{Binomial}(N_0,q)\), where \(q\) is its deployment false-alert probability. For \(K=0\), the one-sided Clopper--Pearson upper confidence limit at confidence level \(1-\gamma\), \(0<\gamma<1\), is
\begin{equation}
 U_{1-\gamma}=1-\gamma^{1/N_0}.
 \label{eq:audit}
\end{equation}
At \(\gamma=0.05\), the condition \(U_{0.95}<0.001\) requires \(N_0\geq2{,}995\).
\end{proposition}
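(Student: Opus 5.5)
The plan has three parts: establish the binomial model, solve the Clopper--Pearson equation at \(K=0\), and then invert the condition \(U_{0.95}<0.001\) into an exact threshold on \(N_0\). I expect the only real obstacle to be numerical, because the threshold falls very close to an integer.

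\textbf{Binomial model.} I condition on the fitted pipeline, so the alert rule becomes a fixed measurable map of an audit document and its seed. The audit documents are i.i.d.\ and independent of the fitting data, and the seeds are drawn independently from a common law. The augmented audit documents are therefore i.i.d., so the alert indicators are i.i.d.\ Bernoulli\((q)\). Their sum \(K\) is then \(\operatorname{Binomial}(N_0,q)\) conditionally on the pipeline.

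\textbf{Formula \eqref{eq:audit}.} I use the standard definition of the one-sided Clopper--Pearson upper limit at an observed count \(k\): it is the largest \(q\) with \(\Prob_q(K\leq k)\geq\gamma\). Since \(\Prob_q(K\leq k)\) is decreasing in \(q\), this is the solution of \(\Prob_q(K\leq k)=\gamma\). For \(k=0\) the tail is \(\Prob_q(K=0)=(1-q)^{N_0}\). This is continuous and strictly decreasing from \(1\) to \(0\) on \([0,1]\), so the equation \((1-U)^{N_0}=\gamma\) has the unique root \(U=1-\gamma^{1/N_0}\), which is \eqref{eq:audit}.

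\textbf{Reduction to a threshold on \(N_0\).} I rewrite \(U_{0.95}<0.001\) as \(\gamma^{1/N_0}>0.999\). Taking logarithms gives \((\log\gamma)/N_0>\log 0.999\). Multiplying by \(N_0>0\) and then dividing by \(\log0.999<0\) reverses the inequality, so the condition is equivalent to
\[
N_0>\frac{\log\gamma}{\log 0.999}=\frac{\log(1/\gamma)}{\log(1/0.999)}.
\]
Equivalently, the condition is \(0.999^{N_0}<\gamma\). Because \(0.999^{N_0}\) is strictly decreasing in \(N_0\), this is a single threshold in \(N_0\). It therefore yields both the necessity asserted in the statement and sufficiency: for any \(N_0\) at or below the threshold, \(U_{0.95}\geq0.001\).

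\textbf{Numerical step (main obstacle).} At \(\gamma=0.05\), \(\log 20\approx2.995732\) and \(-\log0.999\approx0.00100050\), so the ratio is about \(2994.2\). The ratio is only about \(0.23\) above \(2994\), so an approximation such as \(-\log(1-x)\approx x\) would misplace it and is not safe. I would therefore check the two neighbouring integers directly:
\[
0.999^{2994}=\exp(-2994\times0.00100050)\approx e^{-2.99550}\approx0.05001>0.05,
\]
so \(N_0=2994\) fails, while
\[
0.999^{2995}\approx e^{-2.99650}\approx0.04996<0.05,
\]
so \(N_0=2995\) succeeds. By monotonicity, every \(N_0\leq2994\) fails, so \(U_{0.95}<0.001\) requires \(N_0\geq2{,}995\), and this count is attained. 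I would carry enough digits (at least six significant figures of \(\log0.999\)) to certify both comparisons rigorously.
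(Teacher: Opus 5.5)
Your proposal is correct and follows the paper's route: you invert the zero-event tail $(1-U)^{N_0}=\gamma$ to get $U_{1-\gamma}=1-\gamma^{1/N_0}$, then rewrite $U_{0.95}<0.001$ as $N_0>\log(0.05)/\log(0.999)\approx 2994.234$. Your justification of the binomial model, the monotonicity argument, and the direct checks of $0.999^{2994}$ and $0.999^{2995}$ against $0.05$ are sound refinements of the paper's one-line numerical step, not a different argument.
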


\begin{proof}
In the zero-event case, binomial tail inversion \citep{clopperpearson1934} solves
\begin{equation*}
 \Prob_{q=U}(K=0)=(1-U)^{N_0}=\gamma,
\end{equation*}
which gives \eqref{eq:audit}. At \(\gamma=0.05\),
\begin{equation*}
 U_{0.95}<0.001
 \quad\Longleftrightarrow\quad
 N_0>\frac{\log(0.05)}{\log(0.999)}\approx2994.234.
\end{equation*}
The smallest qualifying integer is \(2{,}995\).
\end{proof}

The full Clopper--Pearson procedure has coverage at least \(1-\gamma\) over repeated independent audits of fixed size; \eqref{eq:audit} describes its output when \(K=0\). It gives neither a posterior probability for \(q\) nor a prediction of zero observed errors. Outcome-dependent audit stopping needs a sequential confidence procedure. Shared authorship or prompt lineage can also invalidate the independent-binomial model.

\subsection{Repeated looks, stopping control, and e-processes}\label{sec:eprocess}

For \(K_{\mathrm{look}}\) independent uniform \(p\)-values \(U_1,\ldots,U_{K_{\mathrm{look}}}\), rejecting at any crossing of \(\alpha\) gives
\begin{equation}
 \Prob\!\left(\min_{1\leq k\leq K_{\mathrm{look}}}U_k\leq\alpha\right)
 =1-(1-\alpha)^{K_{\mathrm{look}}}.
 \label{eq:inflation}
\end{equation}
Under dependence, this value can change. Identical uniform \(p\)-values give error \(\alpha\). Without independence, the union bound still gives at most \(\min\{1,K_{\mathrm{look}}\alpha\}\). We retain \(K\) for the random audit count in Proposition~\ref{prop:audit}.

Theorems~\ref{thm:constructionA} and~\ref{thm:constructionB} do establish simultaneous control across their allowed inspections. They therefore protect data-dependent stopping within the registered family or frozen route. Calling these guarantees merely fixed-time, or claiming that they provide no time-uniform control, would understate the results. Their scope remains one document and the specified action family or finite route.

A nonnegative test supermartingale provides another route to stopping-valid inference \citep{howard2020timeuniform}. Relative to the available-information filtration \((\mathcal F_t)\), suppose \((E_t)\) is adapted and integrable, with
\begin{equation}
 E_0=1,\qquad E_t\geq0,\qquad
 \Expect_0(E_t\mid\mathcal F_{t-1})\leq E_{t-1}.
 \label{eq:supermartingale}
\end{equation}
For \(c>0\), let \(\sigma_c=\inf\{t\geq0:E_t\geq c\}\). At a finite horizon \(T\), bounded optional stopping and nonnegativity give
\begin{equation*}
 1\geq\Expect_0 E_{\sigma_c\wedge T}
 \geq c\,\Prob_0(\sigma_c\leq T).
\end{equation*}
Letting \(T\to\infty\) yields \(\Prob_0(\sigma_c<\infty)\leq1/c\). For \(c<1/\alpha\), the event \(\{\sup_t E_t\geq1/\alpha\}\) implies a crossing of \(c\). Let \(c\uparrow1/\alpha\) to obtain
\begin{equation}
 \Prob_0\!\left(\sup_{t\geq0}E_t\geq1/\alpha\right)\leq\alpha,
 \label{eq:ville}
\end{equation}
including a supremum that reaches the boundary only as a limit.

Condition~\eqref{eq:supermartingale} is sufficient, not a definition of all e-processes. More generally, an adapted nonnegative process is an e-process for a null family if its expectation at each permitted stopping time is at most one under each null law. For this definition, see Ramdas, Ruf, Larsson, and Koolen (2022, preprint, \href{https://arxiv.org/abs/2009.03167v3}{arXiv:2009.03167v3}). Our rank proofs construct no such process and establish no conditional supermartingale property for products of transformed ranks.

Marginal rank validity alone cannot justify those products. For example, \(U\sim\operatorname{Unif}(0,1)\) is a valid \(p\)-value, but \(\Expect(1/U)=\infty\). Even for transformations with bounded marginal expectation, dependence between successive factors requires a separate sequential argument. Thus the result established here is pathwise false-alert control, not a betting-product rule or permission for unrestricted optional continuation.

\section{Discussion: scope and limits}\label{sec:discussion}

\paragraph*{Distribution shift.}
Proposition~\ref{prop:shift} keeps the calibration mechanism's law unchanged and changes an independent test draw. It averages over calibration. Example~\ref{ex:conditional} rules out replacing that average guarantee with a level-\(\alpha\) guarantee for each realized calibration set. A numerical shifted-error bound also needs control of \(\TV(P_0,\widetilde P_0)\).

\paragraph*{Subgroup validity.}
A pooled marginal bound does not ensure protection within a subgroup. A stratum-specific conformal argument requires exchangeability within that stratum and sufficient calibration documents there. Researchers must also have authorized access to the attributes used for stratification. Detector bias against non-native English writing illustrates why pooled performance alone is inadequate \citep{liang2023gptdetectors}.

\paragraph*{Adversarial selection and mixed authorship.}
Selecting a human document after querying a detector changes the test-selection mechanism and can invalidate the null exchangeability assumption. Paraphrasing generated text raises a separate power issue: an adversary can weaken evidence against an alternative \citep{sadasivan2025canai}. Mixed human/AI authorship requires a specified target population and label definition; the binary reference test does not estimate the extent of AI assistance.

\paragraph*{Deployment-wide multiplicity.}
Equation~\eqref{eq:target} concerns one null-document decision. Screening many documents, or rerunning one document under several policies, requires additional multiplicity accounting if the target is deployment-wide family-wise error or false-discovery control. The within-document theorems do not supply that accounting.

\paragraph*{Score construction.}
Conformal calibration can use a prespecified measurable scalar score without assuming independent tokens. It does not validate an uncalibrated word-level test. For example, the standard two-sample Kolmogorov--Smirnov calibration assumes independent samples from continuous distributions \citep{scipy2024ks2samp}. Arbitrary vocabulary identifiers impose a relabeling-dependent order, while repeated words create ties. A document-level discrepancy needs its own scientific interpretation and document-level calibration.

\paragraph*{Interpretation and usefulness.}
An alert concerns departure from the human reference, not a posterior probability of AI authorship. Non-rejection retains an insufficient-evidence status. The proofs also allow valid but powerless scores. An empirical comparison must measure detection power and computational cost at matched error targets, including calibration work and abstentions.

\section{Conclusion}\label{sec:conclusion}

We give two conformal rules for adaptive inspection within one document. Construction A fixes an action family and its error allocations, then permits selection from that family. Construction B fixes a route generator and calibrates its complete-path maximum, then permits early termination along the route. Both control the marginal probability of any false alert under document-level exchangeability and their stated development restrictions, without assuming independent tokens.

The calibration counts describe rank feasibility, not sufficient data for useful power. The oracle, shift, and audit results require additional assumptions stated with each result. Neither stopping proof constructs an e-process, although both provide stopping-valid decisions within their prescribed scope. A companion empirical study must determine whether either rule offers useful power and computation savings on text.

\backmatter
\section*{Statements and Declarations}

\textbf{Manuscript status.} This theory manuscript contains analytic results and no empirical evaluation. We present the conformal ranking, union-bound, oracle, concentration, and audit arguments as established tools applied to the specified screening constructions. We do not claim new foundational theory or an exhaustive novelty review. The supporting propositions use the additional assumptions stated in each result.

\textbf{Funding.} No funding was received for this work.

\textbf{Competing interests.} The authors must confirm their relevant financial and non-financial interests before submission.

\textbf{Author contributions.} The authors confirm the following CRediT roles: M.\,Mandap, conceptualization, supervision, and review editing; J.\,Hipolito, methodology and review editing; A.\,Galvez, supervision and review editing; C.\,M.\,Balagtas, conceptualization and original draft; M.\,J.\,Buluran, methodology and software; J.\,R.\,Durmiendo, validation and investigation; R.\,E.\,Lopez, supervision and review editing. Marco Mandap served as the primary author. All authors reviewed and approved the final manuscript.

\textbf{Ethics approval and consent.} This article reports theoretical analysis only, with no studies involving human participants or animals and no participant data. The authors must confirm any institutional requirements before submission.

\textbf{Data availability.} No datasets were generated or analysed in this theoretical study.

\textbf{Code availability.} The Project Burnwatch prototype is available in the versioned repository at \url{https://github.com/mrphilo420/Burnwatch/tree/v0.1.0}. The mathematical results are stated and proved in the text.

\textbf{AI assistance.} AI tools assisted with literature synthesis, drafting, editing, and mathematical review. The authors are responsible for the manuscript and must verify the claims, proofs, references, and final text before submission.

\end{document}